\documentclass{article}
\usepackage[utf8]{inputenc}
\usepackage[T1]{fontenc}
\usepackage[letterpaper, margin=2cm]{geometry}
\usepackage{amsmath, amsthm, amsfonts, amssymb}
\usepackage{halloweenmath} 
\usepackage{wasysym} 
\usepackage{epigraph}
\usepackage{enumitem}
\usepackage{accents} 
\usepackage[noend]{algpseudocode}
\usepackage{float} 
\usepackage[dvipsnames]{xcolor} 
\usepackage{tikz}
\usepackage[framemethod=tikz]{mdframed}
\usepackage{hyperref}

\usetikzlibrary{patterns} 

\hypersetup{
    colorlinks,
    linkcolor={red!50!black},
    citecolor={blue!50!black},
    urlcolor={blue!80!black}
}

\usepackage{xpatch}
\makeatletter
\xpatchcmd{\endmdframed}
  {\aftergroup\endmdf@trivlist\color@endgroup}
  {\endmdf@trivlist\color@endgroup\@doendpe}
  {}{}
\makeatother

\makeatletter
\g@addto@macro\bfseries{\boldmath}
\makeatother

\newtheoremstyle{customstyle}
  {}
  {}
  {}
  {}
  {\bfseries}
  {.}
  { }
  {\thmname{#1}\thmnumber{ #2}\thmnote{ (#3)}}%
\theoremstyle{customstyle}

\newmdtheoremenv[innertopmargin=-2pt, skipabove=5pt, skipbelow=0pt, linewidth=2pt, linecolor=Green, backgroundcolor=Green!10, bottomline=false, leftline=false, rightline=false]{theorem}{Theorem}
\newmdtheoremenv[innertopmargin=-2pt, skipabove=5pt, skipbelow=0pt, linewidth=2pt, linecolor=Green, backgroundcolor=Green!10, bottomline=false, leftline=false, rightline=false]{lemma}{Lemma}

\newcommand{\F}{\ensuremath{\mathcal{F}}}
\newcommand{\iGoedel}{\ensuremath{\accentset{\scalebox{0.35}{$\infty$}}{G}}}
\newcommand{\iRosser}{\ensuremath{\accentset{\scalebox{0.35}{$\infty$}}{R}}}

\setlength{\parindent}{0.25cm}
\setlength{\epigraphwidth}{\textwidth}

\setlist[description]{leftmargin=0.5cm, style=nextline, noitemsep, topsep=0.25pt}
\setlist[itemize]{style=nextline, itemsep=0.25pt, topsep=0.25pt, label={$\Rightarrow$}}


\newenvironment{algo}{\begin{samepage}\medskip\hrule\begin{algorithmic}}{\end{algorithmic}\hrule\medskip\end{samepage}}

\begin{document}

\title{\vspace{-1cm}Incompleteness Ex Machina}
\author{Sebastian Oberhoff\\{\small oberhoff.sebastian@gmail.com}}
\date{\today}

\maketitle

\begin{abstract}
In this essay we'll prove Gödel's incompleteness theorems twice. First, we'll prove them the good old-fashioned way. Then we'll repeat the feat in the setting of computation. In the process we'll discover that Gödel's work, rightly viewed, needs to be split into two parts: the transport of computation into the arena of arithmetic on the one hand and the actual incompleteness theorems on the other. After we're done there will be cake.
\end{abstract}

\epigraph{It is a profoundly erroneous truism, repeated by all copy-books and by eminent people when they are making speeches, that we should cultivate the habit of thinking of what we are doing. The precise opposite is the case. Civilization advances by extending the number of important operations which we can perform without thinking about them. Operations of thought are like cavalry charges in a battle---they are strictly limited in number, they require fresh horses, and must only be made at decisive moments. }{\textit{Alfred North Whitehead}}

\vspace{-5cm}
\begin{figure}[H]
\centering
\begin{tikzpicture}[scale=0.7]
  \draw[pattern=north west lines, pattern color=lightgray, thick] (-9cm, -9cm) rectangle (9cm, 9cm);

  \begin{scope}
    \clip (0cm, -8cm) rectangle (8cm, 16cm);
    \draw[fill=gray] (0cm, 0cm) circle (8cm);
  \end{scope}
  \begin{scope}
    \clip (0cm, -8cm) rectangle (-8cm, 16cm);
    \draw[fill=white] (0cm, 0cm) circle (8cm);
  \end{scope}
  
  \fill[gray] (0cm, -4cm) circle (4cm);
  \fill[white] (0cm, 4cm) circle (4cm);

  \begin{scope}
    \clip (0cm, -8cm) rectangle (-8cm, 16cm);
    \draw[thick] (0cm, -4cm) circle (4cm);
  \end{scope}
  \begin{scope}
    \clip (0cm, -8cm) rectangle (8cm, 16cm);
    \draw[thick] (0cm, 4cm) circle (4cm);
  \end{scope}
  
  \draw[thick] (0cm, 0cm) circle (8cm);

  \draw[fill=white, thick] (0,-4) circle (0.8cm);
  \draw[fill=gray, thick] (0,4) circle (0.8cm);
  
  \node[right] at (-9cm, -8.75cm) {All possible strings};
  
  \draw[dashed, bend left] (215: 8cm) to (-4cm, -4cm);
  \node at (-5cm, -5cm) {Axioms};
  \node at (-4cm, 2cm) {Provable};
  \node at (0cm, -5.1cm) {True but undecidable};
  \draw[fill] (0cm, -4cm) circle (0.05cm) node[above] {$G$};
  
  \node at (4cm, -2cm) {Disprovable};
  \node at (0cm, 5.1cm) {False but undecidable};
  \draw[fill] (0cm, 4cm) circle (0.05cm) node[below] {$\neg G$};
\end{tikzpicture}
\end{figure}

\section{The Incompleteness Theorems On Fast-Forward}

Kurt Gödel's incompleteness theorems are clearly the most significant results in the history of mathematics (fight me). The first establishes that no single ``proper'' formal system can fully settle all mathematical questions; that truth and provability are distinct concepts.\footnote{Note to the veteran reader: this entire presentation is focused on the standard interpretation of arithmetic. No funky transfinite numbers.} The second shows that such a formal system also can't prove itself free of contradiction. If it tries, its wings will melt and it will crash to the ground.

Gödel accomplished these triumphs as follows. Suppose we're given a formal system\footnote{For the purpose of this discussion every formal system is effectively axiomatized by definition. This basically just boils down to the fact that proofs are computer checkable.} $\F$ that is capable of reasoning about elementary arithmetic. Then, as Gödel showed through a lot of toil~\cite{goedel}, it is possible to construct a sentence $G$ which essentially says ``I am not provable in $\F$''. Once he had built this sentence he then simply asked: can $G$ be proven or disproven---that is \textit{decided}---in $\F$?

\begin{description}
\item[Suppose $G$ is provable:]
\begin{description}
\item[]
\item[Either $\neg G$ is also provable:]
\begin{itemize}
\item[]
\item Both $G$ and $\neg G$ are provable in $\F$. Any system in which such a situation arises is also called \textit{inconsistent}.
\end{itemize}
\item[Or $\neg G$ isn't provable:]
\begin{itemize}
\item[]
\item Since $G$ is provable there exists some concrete proof of $G$ which we can write out \textit{inside} $\F$. This is done using a device called Gödel numbering which allows numbers to talk about proofs. It's complicated.
\item Because this proof leads to $G$ this allows us to construct a new proof of <<$G$ is provable>> = $\neg G$.\footnote{Whereas regular quotes (`` '') perform their usual function, guillemets (<< >>) surround formal sentences denoted by their English description. Also ``='' doesn't always have to mean exact equality. I also use it to relate sentences which are merely logically equivalent, provided this is obvious.} But we're assuming $\neg G$ \textit{isn't} provable. We have arrived at a genuine contradiction, not just an inconsistency. So this can't happen. We'll use the shorthand ``\lightning'' for this in the future.
\end{itemize}
\end{description}
\item[Suppose $\neg G$ is provable:]
This is what it means to disprove $G$. But beware: $\neg G$ isn't a self-referential sentence. In particular, it's not equal to <<$\neg G$ isn't provable>>. Instead, it's <<$G$ \textit{is} provable>>. These are very different sentences. $\neg G$ is really more like an evil twin that's telling lies about its sibling (assuming the system $\F$  as a whole is honest). Furthermore, mind the fact that this case and the previous aren't logical opposites. That's where I use either/or.
\begin{description}
\item[Either $G$ is also provable:]
\begin{itemize}
\item[]
\item $\F$ is inconsistent.
\end{itemize}
\item[Or $G$ isn't provable:]
\begin{itemize}
\item[]
\item Since $\neg G$ is provable there exists some concrete proof of $\neg G$ which we can write out \textit{inside} $\F$.
\item But in order to force a contradiction we'll have to find a proof of $G$ = <<$G$ isn't provable>>, not <<$\neg G$ is provable>> (uh-oh).
\item We can't find a proof of $G$ despite the fact that its existence is exactly what $\neg G$ = <<$G$ is provable>> asserts.
\item We haven't convicted $\F$ of an outright inconsistency. But something is still very, very wrong with it.
\end{itemize}
\end{description}
\end{description}

These deductions, as well as the ones in the proofs to follow, are deliberately written in such a painstaking style in order to make the similarities between the first and second set of proofs as plain as possible. These are the decisive moments of our battle.

Now, it would be nice to say that we've proven that, if our formal system $\F$ is consistent, then $G$ is undecidable. Unfortunately our second case wasn't strong enough. So, just as Gödel did in 1931, we'll have to settle for a weaker formulation:

\begin{theorem}[First Incompleteness Theorem---Original Version]
Let $\F$ be an \textit{honest}\footnotemark\ formal system capable of reasoning about elementary arithmetic. Then $\F$ is incomplete; it contains a sentence that can neither be proven nor disproven in $\F$.
\end{theorem}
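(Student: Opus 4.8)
The plan is to take the case analysis above essentially verbatim and let the honesty hypothesis snap shut the one branch that previously eluded us. I will treat the self-referential sentence $G$ = <<$G$ isn't provable>> as already manufactured---its construction is the genuinely hard work and is assumed here---so that the entire argument collapses to showing that neither $G$ nor $\neg G$ can be proven in $\F$.

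First I would rule out that $G$ is provable. If it were, then writing the proof out inside $\F$ via Gödel numbering yields a proof of $\neg G$ = <<$G$ is provable>> as well, so $\F$ is inconsistent; but an inconsistent system proves falsehoods, contradicting honesty. Equivalently, and more cleanly: by honesty a provable $G$ would have to be \emph{true}, yet a true $G$ asserts its own unprovability---\lightning. Either way $G$ is unprovable. Notice this already delivers the celebrated payoff: having established that $G$ really cannot be proven, the sentence $G$ = <<$G$ isn't provable>> is thereby seen to be \emph{true}.

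Next I would rule out that $\neg G$ is provable---the branch that bare consistency could not handle. Since $G$ is true, its negation $\neg G$ is false, and honesty forbids $\F$ from ever proving a false sentence; hence $\neg G$ is unprovable too. This is precisely where honesty does work that consistency cannot: consistency alone permits $\F$ to certify the false claim <<$G$ is provable>>, whereas honesty outlaws it. With both $G$ and $\neg G$ off the table, $G$ is undecidable and $\F$ is incomplete, as claimed.

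The main obstacle is none of the above; it is the single fact I have been granting myself for free---the existence of $G$. Producing a sentence that genuinely encodes <<I am not provable in $\F$>> requires arithmetizing the proof-checking procedure of $\F$ as a genuine arithmetic relation (Gödel numbering) and then diagonalizing to obtain a fixed point of the negated provability predicate. That is the ``complicated'' step the text defers, and it is where the promised transport of computation into arithmetic actually takes place; the incompleteness argument proper, as the case analysis makes plain, is by comparison cheap.
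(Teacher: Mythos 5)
Your proposal is correct and follows essentially the same route as the paper: rule out a proof of $G$ via consistency (which honesty implies), conclude $G$ is true, and then invoke honesty to forbid a proof of the now-false $\neg G$ --- exactly the branch the paper patches with the honesty assumption. Your aside that the construction of $G$ is the genuinely hard, deferred step also matches the paper's own framing.
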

\footnotetext{The traditional term is ``\textit{sound}''. But when explaining this concept to my mother I had to admit that ``honesty'' is much more fitting. Also, I've been burned by overloaded usage of the term ``soundness'' in the past.}

\begin{proof}\vspace{-\topsep}
We just did that.
\end{proof}

Okay, I pulled a fast one there. Crucially, I skipped over a rather important concept: \textit{honesty}. What \textit{exactly} do I mean by that?

As the name suggests, an honest formal system is a formal system that speaks the truth. We can believe the claims it makes. In particular, if our $\F$ is honest and proves $\neg G$ = <<$G$ is provable>>, then, just as $\neg G$ says, it really is the case that $G$ is provable. This is how we can force a contradiction out of the second case.

But honesty is a stronger assumption than consistency. An honest formal system is always consistent. After all, an inconsistent formal system contains a sentence $S$ such that both $S$ and $\neg S$ is provable. One of these must be a lie. However, a consistent formal system isn't necessarily honest. Here's a completely hypothetical example to illustrate: 

Imagine I'm the leader of an idyllic communist utopia. Everything is going according to current five-year plan. Except one day a nuclear reactor of mine suddenly catches fire. If I then go on to tell the world community that there's nothing to worry about, this will merely be a lie, not a contradiction. Even once Swedish scientists begin smelling the fire my excuses, while fabricated, can still retain their \textit{internal} consistency. It's only when I finally confess to my mishap that I contradict myself. At that point I have given two directly contradictory accounts. Now even those without a Geiger counter can tell that I've been dishonest.

Disorienting as it may be, this dishonest-yet-consistent state can also occur in formal systems. Most important for our purposes, it is conceivable that $\F$ can prove $\neg G =$ <<$G$ is provable>> while $G$ in fact \textit{isn't} provable. This would lead to a curious situation I'll call \textit{sub-inconsistency}\footnote{Also known under the name ``\textit{$\omega$-inconsistency}''. Let's spare our horses.} where for all $n\in\mathbb{N}$ the sentence 
\begin{center}
<<$G$ doesn't have a proof shorter than $n$>>
\end{center}
can be proven in $\F$ by just going through all possible proofs up to length $n$ and pointing out that none of them do the trick. Yet the sentence with the ``for all'' on the inside,
\begin{center}
<<$G$ doesn't have a proof shorter than $n$ for all $n\in\mathbb{N}$>> = $G$,
\end{center}
is not only unprovable but flat out contradicted by $\neg G$. This is why Gödel had to assume honesty (or at least \textit{super-consistency}) in his original proof. Basically, he had to press $\F$ up against the wall and shout: ``No games, dammit!''

Okay, but isn't there a less violent way to prevent this kind of sub-terfuge? Indeed there is. In 1936 J. Barkley Rosser found a way~\cite{rosser} to demonstrate the improved

\begin{theorem}[First Incompleteness Theorem---Rosser's Version]
Let $\F$ be a \textit{consistent} formal system capable of reasoning about elementary arithmetic. Then $\F$ is incomplete.
\end{theorem}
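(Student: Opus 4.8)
The plan is to keep Gödel's two-case skeleton completely intact and only swap out the self-referential sentence for a craftier one that never forces us to trust an unbounded ``for all''. Following Rosser, I would reuse the very same self-reference device that produced $G$---Gödel numbering plus a fixed point---to manufacture a sentence $R$ which says

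\begin{center}
<<every proof of me is beaten to the punch by a proof of $\neg R$>>.
\end{center}

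Writing <<$p$ proves $S$>> for the arithmetized claim that the numeric code $p$ is an $\F$-proof of the sentence $S$, and reading ``beaten to the punch'' as the comparison $q \le p$ on codes, the sentence $R$ asserts that for every $p$, if $p$ proves $R$ then some $q \le p$ proves $\neg R$. The entire trick hides in that bound $q \le p$: it converts a dangerous existential into a \emph{finite} search that $\F$ can settle by brute force.

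Now I would rerun the case analysis and watch honesty evaporate. \textbf{Suppose $R$ is provable.} Then there is a concrete proof, say the code $p_0$, and $\F$ can internally confirm <<$p_0$ proves $R$>>. Feeding $p_0$ back into $R$ itself, $\F$ concludes that some $q \le p_0$ proves $\neg R$. But if $\F$ is consistent while already proving $R$, then $\neg R$ has \emph{no} proof at all; in particular $\F$ can march through the finitely many candidates $q = 0, 1, \dots, p_0$, check each one, and assemble <<no $q \le p_0$ proves $\neg R$>>. The two conclusions collide. \lightning\ So a consistent $\F$ cannot prove $R$.

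\textbf{Suppose instead $\neg R$ is provable.} Here $\neg R$ reads <<some proof $p$ of $R$ has no $q \le p$ proving $\neg R$>>. Let $r_0$ be an actual proof of $\neg R$, so $\F$ confirms <<$r_0$ proves $\neg R$>>, and grab the witness $p$ that $\neg R$ promises. Since $r_0$ proves $\neg R$ but no $q \le p$ is allowed to, $\F$ is forced to conclude $p < r_0$. On the other hand, consistency again means $R$ has no proof whatsoever, so $\F$ can brute-force-check that none of $0, 1, \dots, r_0$ proves $R$---flatly contradicting that the witness $p$, which we just pinned below $r_0$, proves $R$. \lightning\ So a consistent $\F$ cannot prove $\neg R$ either, and $R$ is undecidable.

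The main obstacle is precisely the machinery I am proposing to borrow: actually building $R$. Because $R$ must refer to proofs of itself \emph{and} compare their codes, constructing it needs a working Gödel numbering, the fixed-point construction, and enough arithmetic both to express the bounded comparison $q \le p$ and to verify each concrete <<$q$ proves $S$>> by inspection. Granting all of that, the real content is the observation above: since the decisive quantifier is now bounded, every inference Gödel had to route through honesty instead passes through a finite, mechanically checkable search. Consistency by itself manufactures both contradictions, and sub-inconsistency never gets the opening it needed to sabotage the second case.
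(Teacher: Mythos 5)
Your proposal is correct and follows essentially the same route as the paper: you build the standard Rosser sentence with its bounded witness comparison, and in each of the two cases you combine the concrete proof in hand with a finite, $\F$-internalizable search (justified by consistency) to manufacture a proof of the opposite sentence, landing in inconsistency both times. The only cosmetic difference is that you phrase "shorter disproof" as a comparison $q \le p$ of proof codes and merge the paper's two sub-cases by invoking consistency up front, but the skeleton and the key idea---the bounded quantifier replacing the appeal to honesty---are identical.
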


\begin{proof}\vspace{-\topsep}
In hindsight Rosser's trick is quite simple. Instead of formalizing ``I am not provable'' he formalized the sentence ``For every proof of me there exists a shorter disproof''. Let's refer to this sentence as $R$ and see how $R$ can help us.
\begin{description}
\item[Suppose $R$ is provable:]
$\Rightarrow$ There exists some shortest proof $p$ of $R$.
\begin{description}
\item[Either there is a proof of $\neg R$ that's shorter than $p$:]
\begin{itemize}
\item[]
\item $\F$ is inconsistent. \lightning\footnote{Inconsistencies can be promoted to genuine contradictions if they collide with the assumption of consistency.}
\end{itemize}
\item[Or there isn't:]
\begin{itemize}
\item[]
\item We can go through all strings shorter than $p$ and determine that none of them are a proof of $\neg R$.
\item We can write out this list \textit{inside} $\F$ and point out that all longer strings are at least as long as $p$.
\item This allows us to construct a new (longer) proof of <<There exists a proof of $R$ with no shorter disproof (namely $p$)>> = $\neg R$.
\item $\F$ is inconsistent. \lightning
\end{itemize}
\end{description}
\item[Suppose $\neg R$ is provable:]
$\Rightarrow$ There exists some shortest proof $p$ of $\neg R$.
\begin{description}
\item[Either there is a proof of $R$ that's shorter than $p$:]
\begin{itemize}
\item[]
\item $\F$ is inconsistent. \lightning
\end{itemize}
\item[Or there isn't:]
\begin{itemize}
\item[]
\item We can go through all strings shorter than $p$ and determine that none of them are a proof of $R$.
\item We can write out this list \textit{inside} $\F$ and point out that all longer strings are at least as long as $p$.
\item This allows us to construct a new (longer) proof of <<For every proof of $R$ there exists a shorter disproof (namely $p$)>> = $R$.
\item $\F$ is inconsistent. \lightning\qedhere
\end{itemize}
\end{description}
\end{description}
\end{proof}

Whereas Gödel's original proof was limping on one leg, Rosser's version is perfectly balanced, as all things should be, yielding the stronger result.

Finally, there's one more summit to conquer in this mathematical Himalaya. That's the Second Incompleteness Theorem.

\begin{theorem}[Second Incompleteness Theorem]
Again, let $\F$ be a consistent formal system capable of reasoning about elementary arithmetic. Then $\F$ can't prove its own consistency.
\end{theorem}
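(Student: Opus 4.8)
The plan is to reduce the Second Incompleteness Theorem to the First by formalizing Gödel's own reasoning inside $\F$. Recall that in the first case of the original proof we argued: if $\F$ is consistent and $G$ is provable, then (reconstructing the proof inside $\F$) we could prove $\neg G = $ <<$G$ is provable>>, which collides with $\neg G$'s unprovability to give a contradiction. Stripped of the self-reference, the content of that argument is simply the implication <<$\F$ is consistent>> $\Rightarrow$ <<$G$ is unprovable>> $= G$. So the first half of Gödel's analysis, \emph{when carried out formally}, is already a proof of $\mathrm{Con}(\F) \to G$, where $\mathrm{Con}(\F)$ is the arithmetized statement ``$\F$ is consistent.''

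First I would pin down $\mathrm{Con}(\F)$ as a concrete arithmetic sentence, e.g. <<there is no $n$ that codes a proof of $0=1$>>, using the same Gödel numbering that builds $G$. Next I would argue that the entire first branch of our original proof is itself elementary arithmetic reasoning about proof codes --- searching through strings, checking the proof relation, re-assembling a proof of $\neg G$ --- and therefore $\F$, being ``capable of reasoning about elementary arithmetic,'' can carry out that very argument about itself. This yields a genuine $\F$-proof of $\mathrm{Con}(\F) \to G$. Now suppose, for contradiction, that $\F$ \emph{could} prove $\mathrm{Con}(\F)$. Then by modus ponens $\F$ would prove $G$. But the First Incompleteness Theorem (in whichever version we've assumed --- consistency via Rosser, or honesty via the original) tells us that a consistent $\F$ does \emph{not} prove $G$. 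Hence $\F$ cannot prove $\mathrm{Con}(\F)$, which is exactly the claim.

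The hard part, and the step I would flag as the real obstacle, is the middle one: justifying that $\F$ can formalize its own consistency-implies-$G$ argument. This is where the famous \emph{derivability conditions} (usually attributed to Hilbert--Bernays and L\"ob) live --- one must check that provability-within-$\F$ behaves inside $\F$ the way we'd naively expect, in particular that $\F$ proves <<if $S$ is provable then $\square S$ is provable>> for the relevant sentences. Everything else in the proof is just plumbing that reuses the First Incompleteness Theorem as a black box; the genuine toil is in verifying that the informal first-case reasoning really does survive being pushed through the Gödel numbering and reconstructed as a formal derivation. Given the essay's stated strategy of separating ``the transport of computation into arithmetic'' from ``the actual incompleteness theorems,'' I would expect the author to lean on that already-established transport machinery here rather than re-proving the derivability conditions from scratch.
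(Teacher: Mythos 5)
Your proposal is correct and follows essentially the same route as the paper: read the first half of the original First Incompleteness argument as establishing ``$\F$ is consistent $\implies G$'', formalize that implication inside $\F$, apply modus ponens to the assumed proof of consistency to get a proof of $G$, and contradict the unprovability of $G$. Your explicit flagging of the derivability conditions as the genuine obstacle is a more honest account of the step the paper dispatches with ``some hand-waving,'' but it is the same proof.
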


Perhaps the fact that this theorem speaks of consistency, not honesty, makes you a little suspicious. Is this another one of Rosser's upgrades? No. For the Second Incompleteness Theorem Gödel only needed half of his First Incompleteness Theorem. And, as luck would have it, this was exactly the half that spoke of consistency.

\begin{proof}
Let's assume for the purpose of contradiction that $\F$ can prove its own consistency
\begin{itemize}[topsep=0.5em]
\item The first half of the proof for the First Incompleteness Theorem can be read as a proof by contradiction that if $\F$ is consistent, then $G$ isn't provable.
\item Using the fact that <<$G$ isn't provable>> = $G$ this can be stated more concisely as ``$\F$ is consistent $\implies G$''.
\item As Gödel himself pointed out with some hand-waving, there wasn't actually anything in the reasoning we used to prove this that we couldn't carry out just as well in $\F$. So <<$\F$ is consistent $\implies G$>> is actually a theorem in $\F$.
\item We're assuming that $\F$ can prove <<$\F$ is consistent>>. So by modus ponens\footnote{Modus ponens is the inference rule that from $A$ and <<$A \implies B$>> follows $B$.} $\F$ can also prove $G$.
\item But $G$ can't be proven. \lightning\qedhere
\end{itemize}
\end{proof}

Moral of the story: if you declare yourself a very stable genius, you're not~\cite{trump}.

\section{The Rise Of The Machines}

We now come to the dramatic turn in our story. Namely, we're about to demand that our formal system $\F$ can reason about algorithms, not just elementary arithmetic. And then we'll reprove all three of the previous results on top of that.

What could possibly provoke such heresy? Well, you may recall that in the prequel we relied quite heavily on the fact that both $G$ and $R$ could \textit{somehow} be constructed in $\F$. But we never actually performed these constructions. We also liberally made assertions of the form: ``we can list out all proofs up to length $n$ \textit{inside} $\F$''. But $\F$ is about numbers, not formal sentences. How is all of this ultimately accomplished?

The reality is that I hand-waved these details for a reason. They are non-trivial in the extreme (ideal exercises for the reader!). That's because, from a modern perspective, Gödel basically had to teach programming to formal systems that were about arithmetic. And he had to do it half a decade before the first models of computation were even devised. This is where all the tough technical work had to be done. The incompleteness theorems themselves were mere victory laps at the end.

This delimitation between the different parts of Gödel's work is frequently passed over without comment but will become even more apparent as we move on. In fact, setting the record straight on this matter is really the main point of this essay. By placing computation---for example in the form of Turing machines---directly into the bedrock of our formal system our task will become orders of magnitude simpler. Gödel had to scale a vertical cliff to convince his peers that ``$p$ is a proof of $S$'' could be expressed within the confines of arithmetic. We, on the other hand, aided by the ski lift built by Turing and his apostles, can see at a glance that the same could be checked by a computer. Let's exploit that.

Anybody who still harbors a nostalgic longing for ``elementary arithmetic'' can then take on the separate task of showing how to ponder computation in the realm of the natural numbers. Perhaps one can also study computation starting from a different base camp such as knot theory. Who knows? (I'm completely clueless regarding knot theory.) Meanwhile, we will have the incompleteness theorems above the clouds all to ourselves.

\section{To Halt Or Not To Halt}

Alright, here we go. The only ingredient we need to prepare is the unsolvability of the Halting Problem due to Alan Turing in 1936~\cite{turing}.

\begin{lemma}[Unsolvability Of The Halting Problem]
There's no algorithm that implements the function
\[
A \mapsto
\begin{cases}
\text{accept} & \text{if $A(A)$ halts}\\
\text{reject} & \text{if $A(A)$ doesn't halt.}
\end{cases}
\]
(``$A(A)$'' denotes an algorithm $A$ running on its own source code.)
\end{lemma}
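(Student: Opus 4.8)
The plan is to prove this by a diagonalization argument, the same self-referential trick that powers Gödel's construction. First I would suppose, for contradiction, that such an algorithm exists; call it $H$. By assumption, $H$ takes an algorithm $A$ as input and correctly reports whether $A(A)$ halts: it accepts when $A(A)$ halts and rejects when $A(A)$ runs forever. Crucially, $H$ itself always halts—that's what makes it a decision procedure rather than a mere simulation—so we're free to use it as a subroutine.

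\medskip

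Next I would build a new algorithm $D$ (for ``diagonal'') that deliberately does the opposite of whatever $H$ predicts. On input $A$, the algorithm $D$ runs $H(A)$ and then inverts the verdict: if $H$ says $A(A)$ halts, then $D$ enters an infinite loop; if $H$ says $A(A)$ doesn't halt, then $D$ halts. This $D$ is a perfectly legitimate algorithm, since it's just $H$ plus a wrapper that flips the answer, so $D$ has its own source code and can be fed to itself.

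\medskip

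The decisive move is to ask what happens when we run $D$ on its own source code, i.e. compute $D(D)$. By the definition of $D$, it consults $H(D)$ and does the opposite of what $H$ predicts for $D(D)$. So if $D(D)$ halts, then $H$ must have reported ``doesn't halt'' (that's the only verdict that makes $D$ halt), contradicting $H$'s correctness; and if $D(D)$ doesn't halt, then $H$ must have reported ``halts,'' again contradicting $H$'s correctness. Either way we reach a contradiction (\lightning), so no such $H$ can exist.

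\medskip

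The main obstacle—really the only subtle point—is making sure that $D$ is genuinely an algorithm that can be run on itself, so that the self-application $D(D)$ is well-defined. This is exactly the computational analogue of the self-reference that let Gödel construct a sentence asserting its own unprovability, and it's what lets the diagonal bite its own tail. Once $D$ is admitted as a legitimate input to $H$, the contradiction is immediate and the proof is essentially the one-line victory lap the introduction promises.
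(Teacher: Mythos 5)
Your proof is correct and is essentially identical to the paper's: your $D$ is the paper's $\overline{H}$, which halts when $H(A)$ rejects and loops when $H(A)$ accepts, and the contradiction is obtained by the same two-case analysis of $\overline{H}(\overline{H})$. Nothing to add.
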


\begin{proof}\vspace{-\topsep}
Suppose $H$ solves the Halting Problem. Then one could create the following algorithm:
\begin{algo}
\Function{$\overline{H}(A)$}{}
  \If{$H(A)$ rejects}
    \State halt
  \EndIf
  \If{$H(A)$ accepts}
    \State loop
  \EndIf
\EndFunction
\end{algo}
``halt'' just\dots\ halts. We don't even need to bother accepting or rejecting. It doesn't matter. And ``loop'' is a simple infinite loop without exits. Now, question: does $\overline{H}(\overline{H})$ halt?
\newline
\begin{minipage}{0.5\textwidth}
\smallskip
\begin{description}
\item[Either $\overline{H}(\overline{H})$ halts:]
\begin{itemize}[noitemsep]
\item[]
\item $H(\overline{H})$ accepts.
\item $\overline{H}(\overline{H})$ doesn't halt. \lightning
\end{itemize}
\end{description}
\smallskip
\end{minipage}
\begin{minipage}{0.45\textwidth}
\smallskip
\begin{description}
\item[Or $\overline{H}(\overline{H})$ doesn't halt:]
\begin{itemize}[noitemsep]
\item[]
\item $H(\overline{H})$ rejects.
\item $\overline{H}(\overline{H})$ halts. \lightning
\end{itemize}
\end{description}
\smallskip
\end{minipage}
\newline
Either way we get a contradiction.
\end{proof}

Easy peasy. As you'd expect of a mere lemma. We're now ready for the First Incompleteness Theorem.

\begin{theorem}[First Incompleteness Theorem---Original Version By Computation]
\label{original-first-by-computation}
Let $\F$ be an honest Turing complete formal system. Then $\F$ is incomplete.
\label{first-by-comp}
\end{theorem}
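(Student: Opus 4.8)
The plan is to cash in the ski lift. Because $\F$ is Turing complete it will do two finite things for us on demand: run any algorithm, and check whether a given string $p$ is a valid $\F$-proof of a given sentence. That is exactly the machinery Gödel had to build by hand, and it lets me manufacture the self-referential sentence essentially for free. I would define an algorithm $\iGoedel$ so that $\iGoedel(A)$ marches through every string $p$ in turn and halts exactly when it meets a $p$ that $\F$ certifies as a proof of <<$A(A)$ doesn't halt>>. Feeding $\iGoedel$ its own source code --- the very $A(A)$ self-application that powered the Halting Problem --- and writing $G = {}$<<$\iGoedel(\iGoedel)$ doesn't halt>>, I get the pivotal equivalence: $\iGoedel(\iGoedel)$ halts if and only if $\F$ proves $G$. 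Unwinding this, $G$ asserts that $\iGoedel(\iGoedel)$ never halts, i.e.\ that a proof of $G$ is never found --- so $G$ once again says ``I am not provable.''

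With $G$ in hand I would replay the original dissection line for line. \textbf{Suppose $G$ is provable.} Then $\iGoedel(\iGoedel)$ does halt (it eventually trips over that proof). Either $\neg G$ is provable too, so $\F$ is inconsistent, \lightning; or it isn't, in which case the halting run of $\iGoedel(\iGoedel)$ is a finite object I can transcribe inside $\F$ to obtain a proof of <<$\iGoedel(\iGoedel)$ halts>> $= \neg G$, contradicting the assumption that $\neg G$ is unprovable, \lightning. This is the exact computational echo of ``write the proof of $G$ out inside $\F$ to get a proof of $\neg G$''. \textbf{Suppose instead $\neg G$ is provable.} Either $G$ is provable too and $\F$ is inconsistent, \lightning; or $G$ is unprovable, so $\iGoedel(\iGoedel)$ never finds a proof and never halts, which makes $\neg G = {}$<<$\iGoedel(\iGoedel)$ halts>> flatly false. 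As in Gödel's lame leg, no bare inconsistency is forced here; this is precisely the sub-inconsistent trap, with $\F$ able to prove <<$\iGoedel(\iGoedel)$ doesn't halt within $n$ steps>> for every individual $n$ while still proving that it halts overall. To slam the door I must spend my one assumption and invoke \emph{honesty}: an honest $\F$ cannot prove the falsehood $\neg G$, \lightning. Every branch ends in \lightning, so neither $G$ nor $\neg G$ is provable and $\F$ is incomplete.

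The step I expect to be the main obstacle is the one that, thanks to Turing, has quietly stopped being an obstacle at all --- and saying so clearly is really the point. The self-reference needs only the $A(A)$ trick (or, if one insists, the recursion theorem), exactly as in the Halting Problem proof. The two ``we can write it out inside $\F$'' moves rest entirely on $\F$ being able to verify a finite proof-check and a finite halting computation, which is precisely what Turing completeness hands us; no Gödel numbering, no arithmetization. What genuine care remains is identical to the original: the second leg is weak, and only honesty can finish it. As a sanity check one can bypass the diagonal altogether --- were $\F$ both complete and honest, scanning its theorems for a verdict on <<$A(A)$ halts>> versus <<$A(A)$ doesn't halt>> would decide the Halting Problem, contradicting the Lemma --- which confirms that the lemma, not any new ingenuity, is carrying the weight.
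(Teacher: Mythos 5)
Your proposal is correct and matches the paper's own treatment: your main argument is, line for line, the paper's ``Second Proof'' of this theorem (the algorithm $P$, the sentence $\iGoedel = {}$<<$P(P)$ doesn't halt>>, and the four-way case split ending in inconsistency or dishonesty), while your closing ``sanity check'' is exactly the paper's first proof via the Halting Problem. No gaps; both routes are sound and both appear in the paper.
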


For a formal system to be \textit{Turing complete} simply means that it can reason about algorithms. And our arguments won't be overly technical anyway. So let me just leave it there.

\begin{proof}
Suppose $\F$ was complete. Then we could solve the Halting Problem using the following algorithm:
\begin{algo}
\Function{$H(A)$}{}
  \For{$x\in$ all possible strings\footnote{In case it wasn't clear: we're iterating in lexicographical order. This procedure is sometimes also jokingly referred to as the \textit{British Museum algorithm} because  it's akin to using chimps in front of typewriters in an attempt to reproduce all the books in the British Museum.}}
    \If{$x$ proves <<$A(A)$  doesn't halt>>}
      \State reject
    \EndIf
    \If{$x$ proves <<$A(A)$ halts>>}
      \State accept
    \EndIf
  \EndFor
\EndFunction
\end{algo}
Because $\F$ is complete this search will eventually hit upon a proof of  either <<$A(A)$ halts>> or <<$A(A)$ doesn't halt>>. And because $\F$ is honest we'll be able to trust its judgment. That determines whether or not $A(A)$ halts. But, as we saw only a moment ago, it's impossible to solve the Halting Problem. So $\F$ couldn't have been complete. \lightning
\end{proof}

That certainly went by a lot faster than my last reading of \textit{Gödel, Escher, Bach}~\cite{hofstadter}. Though, if you've ever had the incompleteness theorems explained to you by a computer scientist, then you probably saw this proof coming a mile away. And for good reason. It's a very elegant little proof; deserving of its popularity.

\section{Welcome To The World Of Mirrors}

Nevertheless, this isn't the proof I want to use going forward. The reasons are threefold:
\begin{itemize}[topsep=0.5em, label=$\bullet$]
\item The jump from honesty to consistency is rather tricky.
\item The Second Incompleteness Theorem doesn't follow naturally at all.
\item The parallels to Gödel's and Rosser's original proofs are lost. (This one's the biggie.)
\end{itemize}
We can do better. Watch this:
\begin{proof}[Second Proof Of Theorem~\ref{first-by-comp}]
Consider the following algorithm:
\begin{algo}
\Function{$P(A)$}{}
  \For{$x\in$ all possible strings}
    \If{$x$ proves <<$A(A)$ doesn't halt>>}
      \State halt
    \EndIf
  \EndFor
\EndFunction
\end{algo}
I now claim that <<$P(P)$ doesn't halt>> is undecidable in $\F$. Let's call this sentence $\iGoedel$ because it's the perfect analogue of Gödel's $G$. The critical new feature of $\iGoedel$ is that its construction has become just a small matter of programming. If we go through the alternatives, we find:
\begin{description}
\item[Suppose $\iGoedel$ is provable:]
\begin{description}
\item[]
\item[Either $\neg \iGoedel$ is also provable:]
\begin{itemize}
\item[]
\item $\F$ is inconsistent. \lightning
\end{itemize}
\item[Or $\neg \iGoedel$ isn't provable:]
\begin{itemize}
\item[]
\item Since $\iGoedel$ is provable $P(P)$ will eventually find such a proof after some sequence of steps which we can write out inside $\F$.
\item Because this sequence leads to a terminal state this allows us to construct a new proof of <<$P(P)$ halts>> = $\neg \iGoedel$. \lightning
\end{itemize}
\end{description}
\item[Suppose $\neg \iGoedel$ is provable:]
\begin{description}
\item[]
\item[Either $\iGoedel$ is also provable:]
\begin{itemize}
\item[]
\item $\F$ is inconsistent. \lightning
\end{itemize}
\item[Or $\iGoedel$ isn't provable:]
\begin{itemize}
\item[]
\item Since $\neg \iGoedel$ is provable $P(P)$ will eventually find such a proof after some sequence of steps which we can write out inside $\F$.
\item But in order to halt $P(P)$ will have to find a proof of $\iGoedel$ = <<$P(P)$ doesn't halt>>, not $\neg \iGoedel$ (uh-oh).
\item $P(P)$ doesn't halt despite the fact that this is exactly what $\neg \iGoedel$ = <<$P(P)$ halts>> asserts.
\item $\F$ is dishonest. \lightning\qedhere
\end{itemize}
\end{description}
\end{description}
\end{proof}

Is anybody else experiencing déjà vu? Note that, whereas in the previous proof we only established the \textit{existence} of undecidable sentences in $\F$, here we actually have a concrete undecidable sentence on our hands. And we didn't even need the Halting Problem!

Moreover, almost miraculously, the two cases for $\iGoedel$'s provability run into the exact same problems as $G$. In the first case $\F$ plants its $\F$ace squarely in inconsistency. In the second one the precise correlation between the information $\F$ communicated and the facts insofar as they can be determined and demonstrated is such as to cause epistemological problems of sufficient magnitude as to lay upon the logical and semantic resources of the English language a heavier burden than they can reasonably be expected to bear. It told a lie~\cite{yes-prime-minister}.

Upon closer inspection we notice that we're even dealing with the same kind of lie:
\begin{center}
$\mathghost$ sub-inconsistency $\mathghost$
\end{center}
After all, if $\iGoedel$ isn't provable, then one can still prove that $P(P)$ hasn't halted after $n$ steps for all $n\in\mathbb{N}$ by just writing out $n$ steps of the computation in $\F$ and noting that it's still running. But the summarizing sentence <<$P(P)$ hasn't halted after $n$ steps for all $n\in\mathbb{N}$>> = $\iGoedel$ is mysteriously unprovable. If $\F$ then goes on to prove $\neg \iGoedel$ = <<$P(P)$ halts>>, this will be a false promise. Yet we're unable to catch it red-handed.

And if you thought that was neat, just wait until you see Rosser's trick. As you may recall, this one worked by patching the asymmetry between the two cases in Gödel's original argument. Let's see what happens if we try the same here.

\begin{theorem}[First Incompleteness Theorem---Rosser's Version By Computation]
Let $\F$ be a consistent Turing complete formal system. Then $\F$ is incomplete.
\end{theorem}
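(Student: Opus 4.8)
The plan is to play Rosser's trick back exactly as we played Gödel's. The asymmetry in the previous proof lived entirely in the second case: when $\neg\iGoedel$ was provable, $P(P)$ dutifully found that proof but had no reason to \emph{halt} on it, so the worst we could pin on $\F$ was dishonesty rather than outright inconsistency. Rosser's fix was to stage a race between a sentence and its negation, and the computational analogue writes itself---I would simply teach $P$ to bail out the instant it sees a proof of the opposite claim:
\begin{algo}
\Function{$P(A)$}{}
  \For{$x\in$ all possible strings}
    \If{$x$ proves <<$A(A)$ doesn't halt>>}
      \State halt
    \EndIf
    \If{$x$ proves <<$A(A)$ halts>>}
      \State loop
    \EndIf
  \EndFor
\EndFunction
\end{algo}
and, as before, set $\iRosser =$ <<$P(P)$ doesn't halt>>. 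The new loop clause is the whole point: now a proof of $\iRosser$ and a proof of $\neg\iRosser$ \emph{both} force $P(P)$'s hand, and whichever the search meets first dictates whether $P(P)$ halts or runs forever. Semantically $\iRosser$ has become ``every proof of me is preceded by a disproof''---Rosser's sentence, with position in the enumeration playing the role of length.

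From here I would run the same four-way split, and this time every branch bottoms out in \lightning. Suppose $\iRosser$ is provable and let $p$ be the first proof of it that the search meets. Either some proof of $\neg\iRosser$ is met before $p$---then both are provable and $\F$ is inconsistent---or none is, in which case $P(P)$ coasts to $p$ and halts, and transcribing that finite computation inside $\F$ hands us a proof of <<$P(P)$ halts>> $= \neg\iRosser$; inconsistent again. The case where $\neg\iRosser$ is provable is the mirror image: with $p$ now the first proof of $\neg\iRosser$, either a proof of $\iRosser$ beats it and $\F$ is inconsistent, or none does, so $P(P)$ reaches $p$, trips the loop clause, and spins forever; writing \emph{that} out inside $\F$ yields a proof of <<$P(P)$ doesn't halt>> $= \iRosser$, inconsistent once more. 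Since a consistent $\F$ can therefore neither prove nor disprove $\iRosser$, it is incomplete.

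The one spot that repays care---and the place where Rosser's original had to fuss over ``shorter'' versus ``at least as long''---is checking that in each ``or there isn't'' branch $P(P)$ provably \emph{reaches} $p$ without having already halted or looped. Taking $p$ to be the \emph{first} proof of its sentence disposes of this cleanly: no earlier string proves either $\iRosser$ or $\neg\iRosser$, so neither clause can fire ahead of $p$, and $P(P)$ is pinned to exactly the state the argument needs. Position in the fixed enumeration thus does all the work that proof length did for Rosser, and, gratifyingly, honesty never enters the picture---every branch ends in a genuine contradiction, so bare consistency is enough and the spectre of sub-inconsistency is exorcised.
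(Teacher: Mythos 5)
Your proposal is correct and takes essentially the same route as the paper: the identical algorithm (the paper calls it $B$), the same sentence $\iRosser$ = <<$B(B)$ doesn't halt>>, and the same four-way case analysis in which every branch ends in an outright inconsistency. The only cosmetic difference is that you order proofs by position in the enumeration where the paper speaks of the ``shortest'' proof, and you make explicit the small point---which the paper leaves implicit---that no string before $p$ can trigger either clause.
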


\begin{proof}\vspace{-\topsep}
Consider the following algorithm:
\begin{algo}
\Function{$B(A)$}{}\footnote{I picked the names $P$ and $B$ because there's a slight visual resemblance between the letters and the programs they denote. (Also this is why I placed the ``doesn't halt'' case first.)}
  \For{$x\in$ all possible strings}
    \If{$x$ proves <<$A(A)$ doesn't halt>>}
      \State halt
    \EndIf
    \If{$x$ proves <<$A(A)$ halts>>}
      \State loop
    \EndIf
  \EndFor
\EndFunction
\end{algo}
In a sense we've seen this algorithm before. Just take $H$ as it's defined in the first proof of Theorem \ref{original-first-by-computation}, substitute it into $\overline{H}$, and simplify. It's $\F$'s most valiant attempt to implement $\overline{H}$. This time our undecidable sentence is (you guessed it) $\iRosser$ = <<$B(B)$ doesn't halt>>.
\begin{description}
\item[Suppose $\iRosser$ is provable:]
$\Rightarrow$ There exists some shortest proof $p$ of $\iRosser$.
\begin{description}
\item[Either there is a proof of $\neg \iRosser$ that's shorter than $p$:]
\begin{itemize}
\item[]
\item $\F$ is inconsistent. \lightning\footnote{I might add that in this case $B(B)$ never reaches $p$ but finds the shorter proof first. We could dwell on this. But we already have an inconsistency which is all we need.}
\end{itemize}
\item[Or there isn't:]
\begin{itemize}
\item[]
\item $B(B)$ will go through all strings shorter than $p$ and determine that none of them are a proof of $\neg \iRosser$.
\item We can write out this computation inside $\F$ and point out that $B(B)$ will then find $p$ and enter a terminal state.
\item This allows us to construct a new (longer) proof of <<$B(B)$ halts>> = $\neg \iRosser$.
\item $\F$ is inconsistent. \lightning
\end{itemize}
\end{description}
\item[Suppose $\neg \iRosser$ is provable:]
$\Rightarrow$ There exists some shortest proof $p$ of $\neg \iRosser$.
\begin{description}
\item[Either there is a proof of $\iRosser$ that's shorter than $p$:]
\begin{itemize}
\item[]
\item $\F$ is inconsistent. \lightning
\end{itemize}
\item[Or there isn't:]
\begin{itemize}
\item[]
\item $B(B)$ will go through all strings shorter than $p$ and determine that none of them are a proof of $\iRosser$.
\item We can write out this computation inside $\F$ and point out that $B(B)$ will then find $p$ and enter an infinite loop.\footnote{Proving that a program runs forever raises the specter ($\mathghost$) of sub-inconsistency. But rest assured. Any formal system capable of reasoning about algorithms can prove that ``\textbf{while true} do nothing'' runs forever. Otherwise it ought to be ashamed of itself.}
\item This allows us to construct a new (longer) proof of <<$B(B)$ doesn't halt>> = $\iRosser$.
\item $\F$ is inconsistent. \lightning\qedhere
\end{itemize}
\end{description}
\end{description}
\end{proof}

I feel like I'm watching a reboot (in more than one sense). We can even recognize the phenomenon of finding either shorter or longer disproofs and proofs; pure satisfaction.

Alright, time for the grand finale.

\begin{theorem}[Second Incompleteness Theorem---By Computation]
One last time, let $\F$ be a consistent Turing complete formal system. Then $\F$ can't prove its own consistency.
\end{theorem}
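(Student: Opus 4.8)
The plan is to reboot the original Second Incompleteness Theorem proof almost line for line, merely swapping $G$ for $\iGoedel$. The decisive observation is that the ``Suppose $\iGoedel$ is provable'' half of the second proof of Theorem~\ref{first-by-comp} already functions as a proof by contradiction that if $\F$ is consistent, then $\iGoedel$ isn't provable. Moreover $\iGoedel$ quietly asserts its own unprovability just as $G$ did: since $P(P)$ halts exactly when a proof of $\iGoedel$ = <<$P(P)$ doesn't halt>> surfaces, we have <<$\iGoedel$ isn't provable>> = $\iGoedel$. That lets us compress the first half into the single implication ``$\F$ is consistent $\implies \iGoedel$''.

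From there the remaining steps transcribe themselves. First I would note, granting myself the same hand-wave Gödel did, that nothing in the ``Suppose $\iGoedel$ is provable'' branch exceeds what $\F$ can carry out on its own---it is all finite bookkeeping about computations and proof searches, precisely the chores a Turing complete $\F$ was designed for. Hence <<$\F$ is consistent $\implies \iGoedel$>> is itself a theorem of $\F$. Then, assuming for contradiction that $\F$ proves <<$\F$ is consistent>>, modus ponens delivers a proof of $\iGoedel$ inside $\F$. But consistency---our standing hypothesis---already forced $\iGoedel$ to be unprovable. \lightning

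The sole delicate step---the only place any genuine labor is concealed---is the claim that the meta-level reasoning of the first half can be internalized within $\F$. In 1931 this was the vertical cliff; under our setup it shrinks to the observation that $\F$ can formalize assertions about its own proof-search computations, which is just another name for Turing completeness. Everything surrounding it is the same mechanical modus-ponens maneuver we inherited from the arithmetic version, so I expect the proof to read as a near-literal sequel to its predecessor.
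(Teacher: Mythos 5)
Your proposal is correct and follows the paper's own proof essentially line for line: both read the first half of the second proof of Theorem~\ref{first-by-comp} as establishing ``$\F$ is consistent $\implies \iGoedel$'', internalize that implication in $\F$ with the same acknowledged hand-wave, and apply modus ponens to the assumed consistency proof to reach the contradiction. The only cosmetic difference is that you phrase the first half via ``$\iGoedel$ isn't provable'' where the paper phrases it via ``$P(P)$ doesn't halt''---which are the same statement by the construction of $P$.
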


We could continue using $B$ for the Second Incompleteness Theorem. But that would be ugly. Gödel didn't need Rosser's help for his proof. Neither shall we.

\begin{proof}
Let's assume for the purpose of contradiction that $\F$ can prove its own consistency
\begin{itemize}[topsep=0.5em]
\item The first half of our new proof for the First Incompleteness Theorem can be read as a proof by contradiction that if $\F$ is consistent, then $P(P)$ doesn't halt.
\item Using the fact that <<$P(P)$ doesn't halt>> = $\iGoedel$ this can be stated more concisely as ``$\F$ is consistent $\implies$ $\iGoedel$''.
\item Now we simply observe (with some faith) that there wasn't actually anything in the reasoning we used to prove this that we couldn't carry out just as well in $\F$. So <<$\F$ is consistent $\implies$ $\iGoedel$>> is actually a theorem in $\F$.
\item We're assuming that $\F$ can prove <<$\F$ is consistent>>. So by modus ponens $\F$ can also prove $\iGoedel$.
\item But then $P(P)$ will eventually find such a proof and halt. \lightning\qedhere
\end{itemize}
\end{proof}

\section{Taking Off The Glasses}

At this point I can't even tell the difference between the classical approach and the algorithmic retelling anymore. Can you? All these proofs are exactly the same as before! In fact, the relation is so strong that I was able to write the latter ones by copy-pasting wholesale and then tinkering a little. Just line them up and see for yourself. Every single line matches. Clearly, this can't be coincidence. There must be a deeper reason for this overabundance of similarities. And there is. Perhaps you've already caught on long ago. Or perhaps Clark Kent's glasses fooled you too. Drum whirl please.

Recall once more what $\iGoedel$ says. It says ``$P(P)$ doesn't halt''. But $P(P)$ isn't just executing some simple infinite loop. It's looking for something. It's scouring the proofs in $\F$ for a proof of <<$P(P)$ doesn't halt>> = $\iGoedel$. $\iGoedel$'s claim that $P(P)$ will never find a proof of $\iGoedel$ is echoing $G$'s claim that the original computers, \textit{you and me}, will never find a proof of $G$. In other words, $\iGoedel$ is also proclaiming to the world:
\begin{center}
``I am not provable!''
\end{center}
Exercise: Take off $\iRosser$'s glasses as well.

This means I could've organized this presentation very differently. I could've just made these observations right away and then said: ``Now reread the first section. \scalebox{0.9}{$\square$}''. I've been wasting your time!

\section{Refactoring The Incompleteness Theorems}

Gödel was programming the integers. That is that. And he didn't even know it at the time; truly impressive. I think this feat deserves to be encapsulated with its own theorem. Perhaps we can come together and save posterity from the grim fate of also only realizing this much later in their lives. It's for the kids---and for the horses.

\begin{theorem}
Elementary arithmetic is a Turing complete domain.
\end{theorem}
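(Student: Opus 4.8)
The plan is to exhibit, for an arbitrary Turing machine (or your favorite equivalent model), an arithmetic formula that faithfully mirrors its execution, and then to check that arithmetic can actually \emph{prove} the resulting statements whenever they hold. Concretely, I would fix the formal system to be something modest like Robinson's $\mathsf{Q}$ (or $\mathsf{PA}$, if we're feeling generous) and establish two facts: that ``$A(x)$ halts'' is expressible by a $\Sigma_1$ sentence, and that every true $\Sigma_1$ sentence is a theorem. This second point---$\Sigma_1$-completeness---is precisely what licenses all the earlier appeals to ``writing out the computation \emph{inside} $\F$''. Together, expressibility and provability are exactly what ``can reason about algorithms'' ought to mean.

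First I would encode a single machine configuration (state, head position, tape contents) as a natural number; this is pedestrian base-$k$ bookkeeping. Next I would write down the one-step transition relation $\mathrm{Next}(c, c')$ as an arithmetic formula. Since a Turing machine's transition table is finite, this is merely a big finite disjunction over the (state, symbol) cases, each clause a quantifier-free arithmetic condition. So far there is no cliff to speak of.

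The cliff is the computation \emph{history}. To assert ``$A(x)$ halts'' I want the existence of a \emph{sequence} of configurations $c_0, c_1, \dots, c_t$ with $c_0$ the start configuration, $\mathrm{Next}(c_i, c_{i+1})$ holding at every step, and $c_t$ halting. But arithmetic quantifies over numbers, not sequences---there is no ``$\exists\text{ sequence}$.'' This is the obstacle Gödel faced and the genuine work of the whole theorem. The escape hatch is Gödel's $\beta$-function, $\beta(a,b,i) = a \bmod (1 + (i+1)b)$. Using the Chinese Remainder Theorem one shows that for \emph{any} finite sequence of numbers there exist $a, b$ with $\beta(a,b,i)$ reproducing the $i$-th term. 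Because $\beta$ is assembled from nothing but $+$, $\times$, and remainder (itself $\Delta_0$-definable via $r = a \bmod m \iff \exists q\,(a = qm + r \wedge r < m)$), an entire unbounded history can be smuggled into a single pair $(a,b)$, and we quantify over \emph{those}. The halting statement thereby becomes honestly $\Sigma_1$: there exist $a, b, t$ such that the sequence coded by $(a,b)$ up to length $t$ is a valid halting computation.

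Finally I would discharge $\Sigma_1$-completeness: if such witnesses $a, b, t$ really exist---that is, if the machine really halts---then arithmetic proves it, because once the witnesses are named every remaining quantifier is bounded, and $\mathsf{Q}$ settles every true bounded sentence by sheer numerical grinding. Dually, ``$A(x)$ doesn't halt'' lands in $\Pi_1$, and the single provable non-halting fact we leaned on---that ``\textbf{while true} do nothing'' loops forever---is just the degenerate case where $\mathrm{Next}$ freezes the configuration, which even $\mathsf{Q}$ dispatches without working up a sweat. With expressibility and $\Sigma_1$-completeness both in hand, every hand-wave from the preceding sections is made good, and the ski lift is open for business.
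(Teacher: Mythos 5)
Your proposal is correct in substance, but note that the paper never actually proves this theorem: it explicitly ``leave[s] the exact formulation to smarter people,'' because walling off precisely this construction from the incompleteness arguments proper is the essay's whole point. So you are not taking a different route from the paper's proof---you are supplying the proof the paper deliberately omits, and you supply the standard one: configurations as numbers, a $\Delta_0$ one-step relation, Gödel's $\beta$-function plus the Chinese Remainder Theorem to collapse an unbounded computation history into a single pair of witnesses, and $\Sigma_1$-completeness to turn ``the computation can be written out inside $\F$'' from a hand-wave into a theorem. You have correctly identified the sequence-coding step as the genuine cliff, and your two-part decomposition (expressibility plus provability of true $\Sigma_1$ sentences) is exactly what the paper's later appeals to ``writing out the computation inside $\F$'' require. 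One overreach: your closing claim that even $\mathsf{Q}$ ``dispatches without working up a sweat'' the non-halting of the trivial loop. That statement is $\Pi_1$---no code of a valid halting history exists---and $\mathsf{Q}$, lacking induction, is notoriously bad at $\Pi_1$ facts (it cannot even prove $\forall x\,(0 + x = x)$); ruling out every candidate history code requires an induction along the computation. This matters because the paper's Rosser-by-computation argument leans on $\F$ proving that $B(B)$ loops forever once it enters the loop instruction, so the base theory genuinely needs to be $\mathsf{PA}$ (or at least $\mathsf{Q}$ plus enough induction), not $\mathsf{Q}$ alone. Your parenthetical hedge toward $\mathsf{PA}$ should be promoted to the main clause.
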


I'll have to leave the exact formulation of this theorem to smarter people. After all, I did recently score in the bottom 7\% of the GRE's analytical writing section (come on!). The upshot is that this allows us to easily recover the incompleteness theorems for elementary arithmetic from a more general insight:

\begin{theorem}[Turing Incompleteness Theorem]
Any consistent formal system $\F$ capable of reasoning about a Turing complete domain can reason about algorithms. That makes $\F$ a consistent Turing complete formal system. Thereby, the incompleteness theorems apply to it. In slogan form: 
\begin{center}
If it's complete, then it's incomplete.
\end{center}
\end{theorem}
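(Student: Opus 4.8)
The plan is to treat this theorem as a bookkeeping exercise in composing translations, since all of the genuine incompleteness content has already been extracted in the computational proofs above. First I would pin down the operative definition: to call a domain $D$ \emph{Turing complete} is to assert the existence of a computable dictionary that encodes every algorithm, together with its step-by-step execution, as objects and relations living in $D$, in such a way that <<$A(A)$ halts>> becomes a genuine $D$-sentence whose truth tracks the algorithm's actual behavior. By the preceding theorem elementary arithmetic is one such domain, but the argument should never touch the particular flavor of $D$---only the existence of this dictionary.

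Next I would compose. By hypothesis $\F$ can reason about $D$; that is, $\F$'s language subsumes the vocabulary of $D$ and $\F$ proves the basic facts of $D$. Pushing any statement about an algorithm through the Turing-completeness dictionary lands it in the language of $D$, and hence in the language of $\F$. So <<$P(P)$ doesn't halt>> and its relatives are bona fide $\F$-sentences, and the self-referential programs $P$ and $B$---which merely search $\F$'s proofs, a computable task because $\F$ is effectively axiomatized---are themselves legitimate algorithms that the dictionary can describe. The diagonal constructions therefore transfer verbatim.

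The crux---the one step I would refuse to wave through---is checking that the dictionary is \emph{provability}-preserving and not merely truth-preserving, because the proofs above lean on exactly three such facts holding inside $\F$: that any fixed finite prefix of a computation can be written out and certified (so that $\F$ proves <<$P(P)$ hasn't halted after $n$ steps>> for each concrete $n$), that a trivial loop is provably non-halting, and that modus ponens is available. Each is a statement about a \emph{concrete, finite} object, so its $D$-encoding is a concrete $D$-fact that $\F$, reasoning about $D$, can verify; making this rigorous is precisely the ``programming the integers'' labor the essay has deliberately outsourced, and it is where all the difficulty hides.

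Finally, with the dictionary in hand, $\F$ meets the definition of a consistent Turing complete formal system demanded by the computational theorems word for word. I would then simply invoke the Rosser version by computation to conclude that $\F$ is incomplete, and the Second Incompleteness Theorem by computation to conclude that $\F$ cannot prove its own consistency. No new ideas are required---only the observation that the slogan ``if it's complete, then it's incomplete'' has become a corollary of work already done.
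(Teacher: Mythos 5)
Your proposal is correct and follows essentially the same route as the paper: encode the algorithms $P$ and $B$ into the Turing complete domain, observe that $\F$'s ability to reason about the domain then gives it the ability to reason about those algorithms, and rerun the computational proofs verbatim. The only difference is presentational---the paper conveys the encoding step by the concrete example of the Game of Life, whereas you state it abstractly as a ``dictionary'' and explicitly flag that it must be provability-preserving inside $\F$, a point the paper (quite deliberately) waves through.
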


\begin{proof}\vspace{-\topsep}
I think the idea is best conveyed with an example. Take the famously Turing complete Game of Life. For any algorithm one can create a configuration in the Game of Life which mimics that algorithm. Now suppose somebody came up with a formal system $\F$ to reason about the Game of Life. This formal system might be able to prove assertions such as: <<Starting with a configuration of the form [\dots] square $(0, 0)$ will never be occupied>>.

The incompleteness theorems are now immediate. All we have to do is to encode the algorithms $P$ and $B$ into a Game of Life configuration and reiterate the previous arguments. Conclusion: $\F$ will never be able to decide whether the encoding of $B$ ``halts''. And it won't be able to prove its own consistency.
\end{proof}

As a consequence computational undecidability and logical undecidability go hand in hand. And elementary arithmetic is just one further entry in the long list containing the Game of Life, \scalebox{0.95}{Fractran}, \scalebox{0.9}{Post Tag Systems}, \scalebox{0.85}{Magic the Gathering}, \scalebox{0.8}{the human brain}, \scalebox{0.75}{Rule 110}, \scalebox{0.7}{the Lambda Calculus}, \scalebox{0.65}{musical notation}, \scalebox{0.6}{\dots}\\[2em]
\footnotesize{(Thank you so much for your attention! The cake I promised in the abstract is on its way. Please stand by.) $\mathghost$}

\newpage

\section*{Closing Credits}

I'd like to thank Scott Aaronson for his feedback as well as for being gracious enough to post this essay on his blog. Additionally, Cristopher Moore has provided many helpful comments. Finally, I strongly urge you to read Cristopher Moore and Stephan Mertens' book \textit{The Nature of Computation}. It's what made me fall in love with computer science.

\vfill\eject


\begin{thebibliography}{6}
\bibitem{goedel}
Kurt Gödel, \textit{Über formal unentscheidbare Sätze der Principia Mathematica und verwandter Systeme I}, Monatshefte für Mathematik und Physik, Issue 38, Dec. 1931 pp. 173--198
\bibitem{rosser}
John Barkley Rosser, \textit{Extensions of Some Theorems of Gödel and Church}, Journal of Symbolic Logic, Volume 1, No. 3, Sep. 1936, pp. 87--91
\bibitem{trump}
Donald Trump, Twitter, 6. Jan. 2018 (I just want to be able to say that Donald Trump has been cited when discussing the fundamentals of mathematics.)
\bibitem{turing}
Alan Turing, \textit{On Computable Numbers, with an Application to the Entscheidungsproblem}, Proceedings of the London Mathematical Society, Volume s2--42, Issue 1, 1937, pp. 230--265
\bibitem{hofstadter}
Douglas Hofstadter, \textit{Gödel, Escher, Bach}, 1979
\bibitem{yes-prime-minister}
Yes, Prime Minister, \textit{The Tangled Web} (it's a British television sitcom from the 80s)
\bibitem{aaronson} Algorithm $B$ also appears in Scott Aaronson's \textit{Quantum Computing since Democritus} where he uses it to give a pseudo proof of the Second Incompleteness Theorem. It was in plugging the holes in Aaronson's argument that these ideas took their origin.
\end{thebibliography}
\end{document}